\documentclass[11pt]{article}

\usepackage[T1]{fontenc}
\usepackage{newpxtext}
\usepackage[protrusion=true,expansion=true]{microtype}
\usepackage[margin=1.12in,headheight=14pt]{geometry}
\usepackage[dvipsnames]{xcolor}
\definecolor{ink}{HTML}{253447}
\definecolor{accent}{HTML}{2F6F73}
\definecolor{muted}{HTML}{687681}
\usepackage{amsmath,amsthm,amsfonts,amssymb}
\usepackage[numbers,sort]{natbib}
\usepackage{graphicx}
\usepackage{fancyhdr}
\usepackage{epstopdf}
\usepackage[all,cmtip]{xy}
\usepackage{stmaryrd}
\usepackage{stackengine}
\usepackage{mathtools}
\usepackage{subcaption}
\usepackage{tikz}
\usepackage[textsize=footnotesize]{todonotes}
\usepackage{mathrsfs}
\usepackage[toctitles]{titlesec}
\usepackage{enumitem}
\usepackage{titling}
\usepackage[pagebackref,colorlinks,
    linkcolor=accent,
    citecolor=accent,
    urlcolor=accent]{hyperref}
\usepackage{doi}

\usepackage{aliascnt}
\usepackage[capitalize,noabbrev]{cleveref}

\titleformat{\section}
  {\Large\bfseries\color{ink}}
  {\thesection}{0.65em}{}
  [\vspace{0.25em}{\color{accent!55}\titlerule[0.6pt]}]
\titlespacing*{\section}{0pt}{2.4em}{1.15em}
\setlist[itemize]{leftmargin=1.5em,itemsep=0.25em,topsep=0.45em}

\fancypagestyle{plain}{\fancyhf{}}

\pretitle{\begin{center}\color{ink}\LARGE\bfseries}
\posttitle{\par\vspace{0.7em}{\color{accent}\rule{0.18\textwidth}{1.2pt}}\end{center}\vspace{0.2em}}
\preauthor{\begin{center}\small\color{muted}}
\postauthor{\end{center}}
\predate{\begin{center}\small\color{muted}}
\postdate{\end{center}\vspace{1.25em}}

\newcommand\barbelow[1]{\stackunder[1.5pt]{$\hspace{2pt}#1\hspace{2pt}$}{\rule{1ex}{0.075ex}}}

\newcommand{\barabove}[1]{\hspace{2pt}\bar{#1}\hspace{2pt}}

\newtheoremstyle{result}
  {0.9em}{0.9em}
  {\itshape}{}
  {\color{ink}\bfseries}{.}{0.55em}{}
\newtheoremstyle{discussion}
  {0.9em}{0.9em}
  {}{}
  {\color{accent}\bfseries}{.}{0.55em}{}

\theoremstyle{result}
\newtheorem{theorem}{Theorem}[section]
\newaliascnt{lemma}{theorem}
\newtheorem{lemma}[lemma]{Lemma}
\aliascntresetthe{lemma}
\newaliascnt{lem}{theorem}

\aliascntresetthe{lem}
\newaliascnt{corollary}{theorem}
\newtheorem{corollary}[corollary]{Corollary}
\aliascntresetthe{corollary}
\newaliascnt{proposition}{theorem}
\newtheorem{proposition}[proposition]{Proposition}
\aliascntresetthe{proposition}
\theoremstyle{discussion}
\newaliascnt{question}{theorem}

\aliascntresetthe{question}
\newaliascnt{conjecture}{theorem}

\aliascntresetthe{conjecture}
\newaliascnt{remark}{theorem}
\newtheorem{remark}[remark]{Remark}
\aliascntresetthe{remark}
\newaliascnt{definition}{theorem}
\newtheorem{definition}[definition]{Definition}
\aliascntresetthe{definition}
\newaliascnt{example}{theorem}
\newtheorem{example}[example]{Example}
\aliascntresetthe{example}
\newaliascnt{problem}{theorem}

\aliascntresetthe{problem}

\crefname{proposition}{proposition}{propositions}
\Crefname{proposition}{Proposition}{Propositions}

\AtBeginEnvironment{example}{%
  \pushQED{\qed}%
}
\AtEndEnvironment{example}{\popQED}
\AtBeginEnvironment{remark}{%
  \pushQED{\qed}%
}
\AtEndEnvironment{remark}{\popQED}

\AtBeginEnvironment{definition}{%
  \pushQED{\qed}%
}
\AtEndEnvironment{definition}{\popQED}

\makeatletter
\newcommand{\suppressnextindent}{\@afterindentfalse\@afterheading}
\makeatother
\AfterEndEnvironment{theorem}{\suppressnextindent}
\AfterEndEnvironment{lemma}{\suppressnextindent}
\AfterEndEnvironment{lem}{\suppressnextindent}
\AfterEndEnvironment{corollary}{\suppressnextindent}
\AfterEndEnvironment{proposition}{\suppressnextindent}
\AfterEndEnvironment{question}{\suppressnextindent}
\AfterEndEnvironment{conjecture}{\suppressnextindent}
\AfterEndEnvironment{remark}{\suppressnextindent}
\AfterEndEnvironment{definition}{\suppressnextindent}
\AfterEndEnvironment{example}{\suppressnextindent}
\AfterEndEnvironment{problem}{\suppressnextindent}

\usepackage{makecell}

\usepackage{arydshln}
\usepackage{authblk}
\newcommand{\omin}{\barbelow{\otimes}}
\newcommand{\omax}{\barabove{\otimes}}

\title{Purifications for Convex Cones}
\author{Felix Campidell}
\author{Tim Netzer}
\affil{Department of Mathematics, University of Innsbruck, Austria}
\date{\today}

\begin{document}
\maketitle

\begin{abstract}
Motivated by the importance of the purification principle in quantum theory and generalized probabilistic
theories, we study purifications using only the geometry of a
finite-dimensional proper convex cone.
We prove an existence theorem for indecomposable cones and intermediate
tensor cones containing the maximally entangled state; in particular, every
interior point of an indecomposable homogeneous cone admits a purification. This applies to Lorentz cones, for example.
We also give a criterion for uniqueness up to local automorphisms.  On the
boundary, we show that if every proper face of $C$ is simplicial, then only
pure points can admit purifications, and we demonstrate that this conclusion
fails in the presence of non-simplicial faces.  Examples involving positive
semidefinite cones, Lorentz cones, $k$-positive maps, PPT tensors, and
polyhedral cones illustrate both existence and non-uniqueness phenomena.
\end{abstract}

\section{Introduction}

Generalized probabilistic theories provide an operational framework in which
classical and quantum theory, as well as more general hypothetical theories,
can be studied in a common language \citep{Barrett2007,Plavala2023}.  In the
finite-dimensional setting, a physical system is described by a proper
convex cone of unnormalized states, its dual cone describes effects, and
physical transformations are positive linear maps.  The possible composites
of two systems are encoded by tensor cones lying between the minimal and
maximal tensor products.  This formulation separates the probabilistic and
convex-geometric structure of a theory from the particular Hilbert-space
representation used in quantum mechanics.

One of the most powerful principles considered in this framework is
\emph{purification}: every mixed state should arise as the marginal of a pure
state of a larger system, and purifications should be unique up to reversible
transformations on the purifying system.  This principle is not merely a way
of representing mixed states. For instance, it entails many characteristic features of quantum
information theory, including perfect correlations, no information without
disturbance, no cloning, teleportation, and quantum error-correction
phenomena \citep{ChiribellaEtAl2010}.  Together with a small collection of
standard operational axioms, purification singles out finite-dimensional
quantum theory \citep{ChiribellaEtAl2011}.  It is therefore natural to ask
which part of the purification principle follows from the convex geometry of
a state cone alone.

We study this question for an arbitrary finite-dimensional proper cone
$C$.  After fixing an interior point $u\in C$, we call an extreme element of
$C^\vee\omax C$ a purification of $c\in C$ if its marginal, obtained by evaluating the first tensor factor in $u$, is $c$.  More
generally, we consider purifications lying in an intermediate cone
\[
    P\subseteq C^\vee\omax C.
\]
Under the canonical identification
$C^\vee\omax C={\rm Pos}(C)$, a purification is precisely an extreme
positive map $h$ satisfying $h(u)=c$.  Thus existence and uniqueness of
purifications become questions about the extreme rays of the cone of
positive endomorphisms of $C$.

Our main existence result shows that, when $C$ is indecomposable and $P$
contains the maximally entangled state, every point in a natural
automorphism orbit of $u$ has a purification in $P$.  In particular, every
interior point of an indecomposable homogeneous cone has a purification in
the maximal tensor product. On the
boundary, we prove that if every proper face of $C$ is simplicial, then every
nonzero boundary point admitting a purification is pure, and we give an
example showing that non-pure boundary points may be purified when a
non-simplicial face is present.

We next give a criterion for uniqueness in an intermediate cone $P$: it is
controlled by the extreme positive maps fixing $u$ and by the local
automorphisms preserving $P$.  The final section illustrates the results for
positive semidefinite cones, Lorentz cones, $k$-positive maps, the PPT cone,
and polyhedral cones.  These examples exhibit both uniqueness and
non-uniqueness, as well as the dependence of purification on the chosen
composite cone.

\section{Preliminaries}

Throughout this paper, all vector spaces are real and finite-dimensional.
Unless stated otherwise, $C\subseteq V$ and $D\subseteq W$ denote
\emph{proper} convex cones: they are closed, full-dimensional, and salient,
that is, $C\cap(-C)=\{0\}$. We write ${\rm int}(C)$ for the interior of
$C$ and ${\rm ex}(C)$ for the set of nonzero elements spanning extreme rays
of $C$. Elements of ${\rm ex}(C)$ are also called \emph{pure}.

The cone of positive linear maps from $C$ to $D$ is
\[
    {\rm Pos}(C,D)
    =\{g\in{\rm Lin}(V,W)\mid g(C)\subseteq D\},
\]
and we abbreviate ${\rm Pos}(C,C)$ to ${\rm Pos}(C)$. The automorphism group
of $C$ is
\[
    {\rm Aut}(C)
    =\{g\in{\rm GL}(V)\mid g(C)=C\}.
\]

The dual cone of $C$ is
\[
    C^\vee=\{f\in V'\mid f(c)\geqslant0\text{ for all }c\in C\}.
\]
It is again proper. A \emph{unit effect} on $C$ is a functional
$\varphi\in C^\vee$ that is strictly positive on $C\setminus\{0\}$.

For convex cones $C,D$ we use the minimal tensor product 
$$
    C\omin D=\left\{ \sum_i c_i\otimes d_i\mid c_i\in C,d_i\in D\right\}
$$
and the maximal tensor product
$$
    C\omax D=\left( C^\vee\omin D^\vee\right)^\vee.
$$ 
It is well-known and easy to see that under the standard identification 
$V'\otimes W\cong {\rm Lin}(V,W)$
we have
$$
    C^\vee\omax D = {\rm Pos}(C,D).
$$

Now let $C, D$ be convex cones with unit effects $\varphi_1,\varphi_2$. For $\omega\in C\omax D$ we write 
$$
    \omega_C\coloneq ({\rm id}\otimes \varphi_2)(\omega) \mbox{ and } \omega_D\coloneqq(\varphi_1\otimes{\rm id})(\omega)
$$ 
for the so-called \emph{marginals} of $\omega.$ By definition of the maximal tensor product have $\omega_C\in C$ and $\omega_D\in D.$

\begin{proposition}[{\citealp[Lemma~3]{BarnumEtAl2006}}]
\label{prop:pure-marginal-product}
    Let $C, D$ be convex cones with unit effects $\varphi_1,\varphi_2$, and let  $\omega\in C\omax D.$  If  $\omega_C\in{\rm ex}(C)$ or $\omega_D\in{\rm ex}(D)$, then $\omega$ is proportional to $\omega_C\otimes \omega_D.$
\end{proposition}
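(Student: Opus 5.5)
By symmetry it suffices to treat the case $\omega_C\in{\rm ex}(C)$. The plan is to view $\omega$ as a positive map and exploit extremality of its image.

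Identify $C\omax D=(C^\vee\omin D^\vee)^\vee$ with the cone of bilinear forms that are nonnegative on $C^\vee\times D^\vee$. Equivalently, define the linear map $T\colon D^\vee\to V$ by requiring $f(T(g))=\omega(f\otimes g)$ for all $f\in C^\vee$. Since $\omega(f\otimes g)\geqslant0$ for all $f\in C^\vee$ whenever $g\in D^\vee$, bipolarity ($C^{\vee\vee}=C$) gives $T(D^\vee)\subseteq C$, so $T\in{\rm Pos}(D^\vee,C)$. Note that $T(\varphi_2)=({\rm id}\otimes\varphi_2)(\omega)=\omega_C$.

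Next comes the key step, which uses that $\varphi_2$ is an interior point of $D^\vee$. A unit effect, being strictly positive on $C\setminus\{0\}$, lies in ${\rm int}(C^\vee)$, by compactness of a base. Hence for every $g\in D^\vee$ there is $\varepsilon>0$ with $\varphi_2-\varepsilon g\in D^\vee$. Then
$$\omega_C=T(\varphi_2)=\varepsilon T(g)+T(\varphi_2-\varepsilon g),$$
a sum of two elements of $C$. Since $\omega_C$ spans an extreme ray, both summands are nonnegative multiples of $\omega_C$, so $T(g)=\lambda(g)\,\omega_C$ with $\lambda(g)\geqslant0$. Because $D^\vee$ is full-dimensional and thus spans $W'$, $T$ maps all of $W'$ into the line $\mathbb{R}\omega_C$. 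Consequently $T=\omega_C\otimes\lambda$ for a linear functional $\lambda$ on $W'$, i.e.\ $\lambda$ is some $d\in W$, and $\omega=\omega_C\otimes d$.

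Finally, identify $d$. Applying $\varphi_1\otimes{\rm id}$ gives $\omega_D=\varphi_1(\omega_C)\,d$. Since $\omega_C\neq0$ is pure, $\varphi_1(\omega_C)>0$, so $\omega=\varphi_1(\omega_C)^{-1}\,\omega_C\otimes\omega_D$, which is the claimed proportionality. The only mildly delicate point is the interiority of the unit effect in the dual cone; everything else is linear algebra. The case $\omega_D\in{\rm ex}(D)$ follows by swapping the roles of the two factors.
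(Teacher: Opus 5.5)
The paper gives no proof of this proposition and only cites Barnum et al.; your argument is correct and complete, since closedness of the cones (which the paper assumes) is exactly what you need for $C^{\vee\vee}=C$ and for $\varphi_2\in{\rm int}(D^\vee)$. It is essentially the standard conditional-state argument: $T(g)$ is the unnormalized conditional state of the first system for the effect $g$, and extremality of $\omega_C=T(\varepsilon g)+T(\varphi_2-\varepsilon g)$ forces every conditional state to be proportional to the marginal.
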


\begin{proposition}[{\citealp[Theorem 3.22 and Proposition 4.48]{vanDobbenDeBruyn2020}}]
\label{prop:extreme-rays-tensor-cones}
    For convex cones $C$ and $D$ we have 
    $$
        {\rm ex}(C\omin D)=\{ c\otimes d\mid c\in{\rm ex}(C), d\in{\rm ex}(D)\}\subseteq{\rm ex}(C\omax D).
    $$
\end{proposition}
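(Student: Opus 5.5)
We need to prove Proposition~\ref{prop:extreme-rays-tensor-cones}: the extreme rays of $C\omin D$ are exactly the products $c\otimes d$ of extreme elements, and these are also extreme in $C\omax D$. The plan is to prove the second claim first, since it makes the first nearly formal.

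\emph{Step 1: $c\otimes d$ is extreme in $C\omax D$ for $c\in{\rm ex}(C)$, $d\in{\rm ex}(D)$.} Suppose $c\otimes d=\omega_1+\omega_2$ with $\omega_i\in C\omax D$. Fix unit effects $\varphi_1\in{\rm int}(C^\vee)$ and $\varphi_2\in{\rm int}(D^\vee)$; these exist because the cones are proper. Take marginals. We get $(\omega_1)_C+(\omega_2)_C=\varphi_2(d)\,c$, and each $(\omega_i)_C$ lies in $C$. Since $c$ spans an extreme ray, each $(\omega_i)_C$ is a nonnegative multiple of $c$. Hence each $(\omega_i)_C$ is either zero or lies in ${\rm ex}(C)$.

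If $(\omega_i)_C=0$, then $\omega_i=0$. To see this, note that for every $g\in D^\vee$ the element $({\rm id}\otimes g)(\omega_i)$ lies in $C$. Moreover, $({\rm id}\otimes g)(\omega_i)\le \lambda\,({\rm id}\otimes\varphi_2)(\omega_i)$ in the cone order whenever $\lambda\varphi_2-g\in D^\vee$, and such a $\lambda$ exists because $\varphi_2$ is interior. So $({\rm id}\otimes g)(\omega_i)=0$ for all $g\in D^\vee$, which span $W'$. This forces $\omega_i=0$.

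Otherwise $(\omega_i)_C\in{\rm ex}(C)$, and Proposition~\ref{prop:pure-marginal-product} gives $\omega_i\propto (\omega_i)_C\otimes(\omega_i)_D = \mu_i\, c\otimes(\omega_i)_D$. Summing, $c\otimes\big(\mu_1(\omega_1)_D+\mu_2(\omega_2)_D\big)=c\otimes d$ up to the proportionality constants. So $d$ is a nonnegative combination of two elements of $D$, and extremality of $d$ forces each $(\omega_i)_D\propto d$. Therefore $\omega_i\propto c\otimes d$, which proves extremality. (Nonzero-ness of $c\otimes d$ is clear.)

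\emph{Step 2: the equality for $C\omin D$.} The inclusion $\supseteq$ follows from Step~1: $c\otimes d\in C\omin D\subseteq C\omax D$ and it is extreme in the larger cone, hence extreme in the smaller one. For $\subseteq$, every element of $C\omin D$ is a finite sum $\sum c_i\otimes d_i$. Each $c_i$ and each $d_i$ is a sum of extreme elements, since a closed salient cone is generated by its extreme rays (Krein--Milman/Minkowski). Thus every element of $C\omin D$ is a sum of products $c\otimes d$ with $c,d$ extreme. If $\omega$ is extreme in $C\omin D$, each such summand must be proportional to $\omega$, so $\omega\propto c\otimes d$ for some extreme $c,d$.

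The main obstacle is the careful use of Proposition~\ref{prop:pure-marginal-product} in Step~1. We must rule out degenerate zero marginals, which the argument above does via interiority of the unit effect. We must also check that applying the marginal argument on the $C$-side and then using extremality of $d$ really pins down both $\omega_i$.
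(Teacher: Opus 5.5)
Your proof is correct. It differs from the paper only in that the paper gives no argument at all: it simply cites van Dobben de Bruyn (Theorem 3.22 and Proposition 4.48), where tensor products of cones are treated in a broader setting than finite-dimensional proper cones.

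You instead prove the statement inside the paper's standing assumptions, in two steps:
\begin{itemize}
\item You get extremality of $c\otimes d$ in $C\omax D$ from the pure-marginal lemma \Cref{prop:pure-marginal-product}, together with your order-unit argument that a vanishing marginal forces $\omega_i=0$.
\item You then obtain the description of ${\rm ex}(C\omin D)$ formally, from conic Krein--Milman and the fact that extremality passes to subcones.
\end{itemize}

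The gain is self-containment: \Cref{cor:minmax} then rests only on the Barnum et al.\ lemma and elementary convexity. The price is that you use closedness and salience essentially: for $C^{\vee\vee}=C$, for generation by extreme rays, for the existence of an interior unit effect, and for $D^\vee$ spanning $W'$. This is harmless here, since these are exactly the paper's standing hypotheses.

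You can also bypass both the marginal lemma and the zero-marginal case split that you flagged as the main obstacle. Suppose $\omega$ and $c\otimes d-\omega$ lie in $C\omax D$. Read $\omega$ as a map $C^\vee\to D$ with $0\le\omega(f)\le f(c)\,d$ in the order of $D$. Extremality of $d$ then puts each $\omega(f)$ on the ray through $d$, so $\omega(f)=\psi(f)\,d$ for a linear functional $\psi$. This $\psi$ satisfies $\psi\in C$ and $c-\psi\in C$, so extremality of $c$ makes it a multiple of $c$.
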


\begin{corollary}
\label{cor:minmax}
    Let $C, D$ be convex cones with unit effects $\varphi_1,\varphi_2$.
    Then for $\omega\in{\rm ex}(C\omax D)$ we have 
    $$
        \omega_C\in{\rm ex}(C)\ \Leftrightarrow\  \omega\in C\omin D.
    $$
\end{corollary}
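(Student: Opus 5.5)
Both directions follow by combining Proposition~\ref{prop:pure-marginal-product} with Proposition~\ref{prop:extreme-rays-tensor-cones}. We treat the two implications separately, in each case using that $\omega\neq 0$ because it spans an extreme ray.

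\emph{($\Rightarrow$)} Suppose $\omega_C\in{\rm ex}(C)$. By Proposition~\ref{prop:pure-marginal-product}, $\omega=\lambda\,\omega_C\otimes\omega_D$ for some scalar $\lambda$. Since $\omega\neq0$, both $\lambda$ and $\omega_D$ are nonzero. Applying $\varphi_1\otimes{\rm id}$ gives
\[
\omega_D=\lambda\,\varphi_1(\omega_C)\,\omega_D .
\]
Here $\varphi_1(\omega_C)>0$, because $\omega_C$ is a nonzero element of $C$ and $\varphi_1$ is a unit effect. Hence $\lambda=1/\varphi_1(\omega_C)>0$. Because $\omega_C\in C$ and $\omega_D\in D$, the product $\lambda\,\omega_C\otimes\omega_D$ is a nonnegative multiple of an elementary tensor of cone elements, so $\omega\in C\omin D$. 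Extremality of $\omega$ is not even needed for this direction.

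\emph{($\Leftarrow$)} Suppose $\omega\in C\omin D$. Since $C\omin D\subseteq C\omax D$ and $\omega$ spans an extreme ray of the larger cone, it also spans an extreme ray of the smaller cone: any decomposition $\omega=\omega_1+\omega_2$ within $C\omin D$ is also a decomposition within $C\omax D$, so both summands are multiples of $\omega$. Thus $\omega\in{\rm ex}(C\omin D)$, and Proposition~\ref{prop:extreme-rays-tensor-cones} gives $\omega=c\otimes d$ with $c\in{\rm ex}(C)$ and $d\in{\rm ex}(D)$. Then
\[
\omega_C=\varphi_2(d)\,c ,
\]
and $\varphi_2(d)>0$ since $d$ is a nonzero element of $D$. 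A positive multiple of an extreme element is again extreme, so $\omega_C\in{\rm ex}(C)$.

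The argument has no real obstacle. The only points that need care are showing that the scalar $\lambda$ in the forward direction is positive, which uses the unit effect, and noting that extremality passes from $C\omax D$ down to the subcone $C\omin D$.
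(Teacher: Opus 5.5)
Your proof is correct and takes the same route as the paper, which simply says the corollary follows immediately from \Cref{prop:pure-marginal-product,prop:extreme-rays-tensor-cones}. You have only written out the routine details the paper leaves implicit: positivity of the scalar via the unit effect, and extremality passing from $C\omax D$ down to the subcone $C\omin D$.
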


\begin{proof}
    This follows immediately from \Cref{prop:pure-marginal-product,prop:extreme-rays-tensor-cones}.
\end{proof}

\begin{proposition}[{\citealp[Theorem 2.B.2]{Barker1981}}]
\label{prop:automorphisms-extreme}
    We have $${\rm id}\in{\rm ex}({\rm Pos}(C))\Leftrightarrow {\rm Aut}(C)\subseteq{\rm ex}({\rm Pos}(C))\Leftrightarrow C \mbox{ indecomposable}.$$
\end{proposition}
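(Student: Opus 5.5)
The plan is to prove the three conditions equivalent via the cycle (indecomposable $\Rightarrow$ ${\rm id}$ extreme), (${\rm id}$ extreme $\Rightarrow$ ${\rm Aut}(C)\subseteq{\rm ex}({\rm Pos}(C))$), and ($\mathrm{Aut}(C)\subseteq{\rm ex}({\rm Pos}(C))\Rightarrow$ indecomposable). Here $C$ is called decomposable if $V=V_1\oplus V_2$ with $V_1,V_2$ nonzero and $C=(C\cap V_1)\oplus(C\cap V_2)$.

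The middle implication is formal. If $g\in{\rm Aut}(C)$, then $h\mapsto g\circ h$ is a linear automorphism of the cone ${\rm Pos}(C)$, with inverse $h\mapsto g^{-1}\circ h$. Such a map sends extreme rays to extreme rays, so it carries ${\rm id}$ to the extreme element $g$. The third implication also comes first by contraposition. If $C=C_1\oplus C_2$ with nontrivial summands, the projections $p_1,p_2$ onto $V_1,V_2$ along the decomposition are positive, since $p_i(C)=C_i\subseteq C$. They satisfy ${\rm id}=p_1+p_2$, and neither is a multiple of ${\rm id}$. Hence ${\rm id}\in{\rm Aut}(C)$ is not extreme.

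The substantive step is to show that if $C$ is indecomposable, then ${\rm id}$ is extreme. Suppose ${\rm id}=h_1+h_2$ with $h_i\in{\rm Pos}(C)$. Then for each $c\in{\rm ex}(C)$ we have $c=h_1(c)+h_2(c)$ with both summands in $C$, so $h_1(c)=\lambda(c)\,c$ for some $\lambda(c)\in[0,1]$. Thus every extreme vector is an eigenvector of $h_1$. Since ${\rm ex}(C)$ spans $V$ (Krein–Milman for the closed pointed cone $C$), $h_1$ is diagonalizable. Write $V=\bigoplus_\mu E_\mu$ for its eigenspace decomposition; every extreme ray lies in a single $E_\mu$. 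I then claim $C=\bigoplus_\mu (C\cap E_\mu)$. Indeed, every $c\in C$ is a sum of extreme vectors, and grouping them by eigenvalue writes $c$ as a sum of elements $c_\mu\in C\cap E_\mu$. Moreover $C\cap E_\mu$ is exactly the image of $C$ under the spectral projection onto $E_\mu$. Because of uniqueness of components in a direct sum, the sum $\sum_\mu (C\cap E_\mu)$ is the direct-sum decomposition in the required sense. Indecomposability then forces a single eigenspace, so $h_1=\mu\,{\rm id}$ and ${\rm id}$ is extreme.

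The main obstacle is making the decomposition argument airtight. One needs both that the eigenspaces are each spanned by the extreme rays lying in them, and that the cone splits as a direct sum rather than merely being contained in one. The first follows because the extreme vectors span $V$ and each lies in one eigenspace. The second follows from the grouping argument above together with the inclusion $\bigoplus_\mu (C\cap E_\mu)\subseteq C$, which holds because $C$ is a convex cone. I would write this out carefully, including the convention that indecomposable means no nontrivial such splitting. The statement is cited from Barker, so a short self-contained proof along these lines suffices.
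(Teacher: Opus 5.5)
Your proof is correct. The cycle of implications is complete. The two easy steps are handled properly: left composition with $g\in{\rm Aut}(C)$ is an automorphism of ${\rm Pos}(C)$, and the projections give ${\rm id}=p_1+p_2$. The key step is also sound: every extreme vector is an eigenvector of any summand $h_1$ of ${\rm id}$, so $h_1$ is diagonalizable and $C$ splits along its eigenspaces. To match your two-summand definition of decomposability, group one eigenspace against the sum of the others. The paper gives no proof of its own here and simply cites Barker, so your argument serves as a self-contained proof along the standard lines.
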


\section{Purifications}

\begin{definition}[Purifications]
    Let $C$ be a convex cone equipped with a unit effect $\varphi$ on
    $C^\vee$. A \emph{purification} of $c\in C$ is an element
    \[
        \omega\in{\rm ex}(C^\vee\omax C)
        \qquad\text{such that}\qquad
        \omega_C=c.
    \]
    If, in addition, $\omega$ belongs to a convex cone
    $P\subseteq C^\vee\omax C$, we call it a \emph{purification of $c$ in
    $P$}. Purity is always understood relative to the maximal tensor cone:
    membership in ${\rm ex}(P)$ alone is not sufficient.
\end{definition}

\begin{remark}
    Let $c\in{\rm ex}(C)$ and let $\omega$ be a purification of $c$.
    Since one marginal of $\omega$ is pure,
    \Cref{prop:pure-marginal-product} implies that $\omega$ is a product
    tensor. Consequently,
    \[
        \omega\in C^\vee\omin C.
    \]
    Thus pure elements have only trivial purifications.
\end{remark}

Under the canonical identification $V\cong V''$, every unit effect
$\varphi$ on $C^\vee$ is evaluation at a unique point
$u\in{\rm int}(C)$.
We fix this choice of $u$ and $\varphi$ throughout the remainder of the
section.

\begin{lemma}
\label{lem:marginal-evaluation}
    For $\omega\in C^\vee\omax C={\rm Pos}(C)$ we have $\omega_C=\omega(u).$     
\end{lemma}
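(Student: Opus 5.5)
The plan is to reduce both sides to pure tensors and then extend by linearity. The statement only involves the identification $C^\vee\omax C={\rm Pos}(C)$, the standard isomorphism $V''\otimes V\cong V'\otimes V\cong{\rm Lin}(V,V)$, and the unit effect $\varphi={\rm ev}_u$ on $C^\vee$. Both $\omega\mapsto\omega_C$ and $\omega\mapsto\omega(u)$ are linear maps $V'\otimes V\to V$, so it suffices to check that they agree on elementary tensors $f\otimes v$ with $f\in V'$ and $v\in V$. Elementary tensors span $V'\otimes V$, and the agreement then extends to all of $V'\otimes V$, in particular to $C^\vee\omax C$.

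First recall the marginal convention. For $\omega$ in the composite of the first system $C^\vee$ (with unit effect $\varphi$) and the second system $C$, the marginal $\omega_C$ is obtained by applying $\varphi$ to the first tensor factor, namely $\omega_C=(\varphi\otimes{\rm id})(\omega)$. The introduction phrases this as "evaluating the first tensor factor in $u$". For $\omega=f\otimes v$ this gives
\[
\omega_C=\varphi(f)\,v=f(u)\,v .
\]
Here we use that $\varphi$ is evaluation at $u$ under $V\cong V''$.

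Next, under $V'\otimes V\cong{\rm Lin}(V,V)$ the tensor $f\otimes v$ corresponds to the rank-one map $x\mapsto f(x)v$. Evaluating this map at $u$ gives $f(u)v$, which matches the marginal computed above.

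Both sides are therefore linear in $\omega$ and agree on a spanning set, so $\omega_C=\omega(u)$ for all $\omega\in V'\otimes V$, and in particular for $\omega\in C^\vee\omax C={\rm Pos}(C)$. There is no real obstacle here. The only care needed is bookkeeping: one must confirm that the factor on which $\varphi$ acts (the $C^\vee$ factor) is exactly the factor that the identification with ${\rm Lin}(V,V)$ treats as the input variable.
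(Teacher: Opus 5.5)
Your proof is correct and is essentially the paper's argument. The paper writes $\omega=\sum_i f_i\otimes x_i$ and computes $(\varphi\otimes{\rm id})(\omega)=\sum_i \varphi(f_i)x_i=\sum_i f_i(u)x_i=\omega(u)$ directly, which amounts to your check on elementary tensors followed by linear extension.
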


\begin{proof}
    Under the identification $V'\otimes V\cong{\rm Lin}(V,V)$, a tensor
    $\omega=\sum_i f_i\otimes x_i$ corresponds to the map
    $v\mapsto\sum_i f_i(v)x_i$. Therefore its marginal on the second factor is
    \[
        \omega_C
        =(\varphi\otimes{\rm id})(\omega)
        =\sum_i \varphi(f_i)x_i
        =\sum_i f_i(u)x_i
        =\omega(u),
    \]
    as claimed.
\end{proof}

\begin{proposition}[Cones with simplicial faces]
\label{prop:poly}
    Let $C$ be a cone such that every proper face of $C$ is
    simplicial. If  a nonzero boundary point of $C$ admits a purification, then it is pure.
\end{proposition}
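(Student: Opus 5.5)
The plan is to show that the positive map corresponding to a purification is forced to take values in the minimal face of $c$. Simpliciality of that face then splits the map into a sum of rank-one positive maps, and extremality leaves only one summand.

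Let $c\in\partial C\setminus\{0\}$ and let $\omega=h\in{\rm ex}({\rm Pos}(C))$ be a purification of $c$. By \Cref{lem:marginal-evaluation} this means $h(u)=c$. Let $F$ be the smallest face of $C$ containing $c$. Since $c$ lies on the boundary, $F$ is a proper face and hence simplicial. So $F$ is generated by linearly independent extreme vectors $e_1,\dots,e_k$, and $c=\sum_i\lambda_ie_i$ with all $\lambda_i>0$ by minimality of $F$. The goal is to show $k=1$.

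\emph{Step 1: $h(C)\subseteq F$.} This is the main point. Since $u\in{\rm int}(C)$, for every $x\in C$ there is $\varepsilon>0$ with $u-\varepsilon x\in C$. Positivity of $h$ gives $c-\varepsilon h(x)=h(u-\varepsilon x)\in C$. Thus $c=\varepsilon h(x)+(c-\varepsilon h(x))$ is a sum of two elements of $C$ lying in the face $F$, so by the face property $h(x)\in F$.

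\emph{Step 2: decomposition.} Since the $e_i$ are linearly independent, there are coordinate functionals $f_1,\dots,f_k$ on ${\rm span}(F)$ with $f_i(e_j)=\delta_{ij}$. Because $h(V)\subseteq{\rm span}(F)$, we may set $g_i\coloneqq f_i\circ h\in V'$ and obtain $h=\sum_{i=1}^k g_i\otimes e_i$. By Step 1, for $x\in C$ the vector $h(x)\in F$ has nonnegative coordinates, so $g_i\in C^\vee$. Each summand $g_i\otimes e_i$ therefore lies in $C^\vee\omin C\subseteq{\rm Pos}(C)$. Moreover $g_i(u)=f_i(c)=\lambda_i>0$, so every summand is nonzero.

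\emph{Step 3: extremality.} Since $h$ is extreme in ${\rm Pos}(C)$, each summand $g_i\otimes e_i$ must be a positive multiple of $h$. In particular the image of $h$ is ${\rm span}(e_1)$, while the image of the summand $g_2\otimes e_2$ is ${\rm span}(e_2)$. If $k\geqslant2$, these two lines differ by linear independence, which is impossible. Hence $k=1$, so $F$ is a ray and $c=\lambda_1e_1$ is pure.
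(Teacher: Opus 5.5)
Your proof is correct and follows the paper's argument: the face property forces $h(C)\subseteq F$, and the linearly independent generators of the simplicial face give $h=\sum_i g_i\otimes e_i$ with $g_i\in C^\vee$. The only difference is the last step. You conclude directly from extremality of $h$, using the minimal face so that all $\lambda_i>0$, whereas the paper observes that $h\in C^\vee\omin C$ and invokes \Cref{cor:minmax}, i.e.\ the description of the extreme rays of the minimal tensor product.
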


\begin{proof}
    Let $h\in{\rm ex}({\rm Pos}(C))$ be a purification and suppose that
    $c=h(u)$ belongs to a proper face $F$ of $C$. For every $x\in C$ there
    exists $\lambda>0$ such that $\lambda u-x\in C$. Hence
    \[
        \lambda c=h(x)+h(\lambda u-x)\in F,
    \]
    and the facial property gives $h(x)\in F$. Thus $h(C)\subseteq F$.

    Since $F$ is simplicial, write
    \[
        F={\rm cone}\{r_1,\ldots,r_k\}
    \]
    with linearly independent generators. Since the $r_i$ linearly span the image of the map $h$, there is a tensor representation 
    $$ 
        h=\sum_{i=1}^k f_i\otimes r_i
    $$
    with $f_i\in V'.$ For all $d\in C$ we have
    $$
    F\ni h(d)=\sum_i f_i(d)r_i,
    $$
    which implies $f_i(d)\geqslant 0$ and thus $f_i\in C^\vee,$ for all $i$. So $h\in C^\vee\omin C$ and thus $c=h(u)=h_C$ is pure, by \Cref{cor:minmax}.
\end{proof}

Let $(v_i)$ be a basis of $V$ and $(v_i')$ its dual basis. We call the
tensor
\[
    m=\sum_i v_i'\otimes v_i\in V'\otimes V
\]
the \emph{maximally entangled state}. It is independent of the chosen basis
and, under the identification
$V'\otimes V\cong{\rm Lin}(V,V)$, corresponds to the identity map on $V$. In particular, $m\in C^\vee\omax C={\rm Pos}(C)$.

\begin{theorem}[Existence of purifications]
\label{cor:hompuri}
    Let $C$ be indecomposable and let
    \[
        P\subseteq C^\vee\omax C
    \]
    be a convex cone containing the maximally entangled state $m$. Define
    \[
        G_2(P)=\bigl\{g\in{\rm Aut}(C)\bigm|
        {\rm id}\otimes g\in{\rm Aut}(P)\bigr\}.
    \]
    Then every point in the orbit $G_2(P)\cdot u$ has a purification in $P$.
\end{theorem}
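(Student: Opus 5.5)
The plan is to take the candidate $\omega=({\rm id}\otimes g)(m)$ for a point $g(u)$ with $g\in G_2(P)$, and check membership in $P$, extremality in $C^\vee\omax C$, and the marginal condition in turn.

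First I identify $\omega$ as a map. Under $V'\otimes V\cong{\rm Lin}(V,V)$, applying ${\rm id}\otimes g$ to a tensor $\sum_i f_i\otimes x_i$ gives $\sum_i f_i\otimes g(x_i)$. This corresponds to the composition $v\mapsto g\bigl(\sum_i f_i(v)x_i\bigr)$. Since $m$ corresponds to ${\rm id}_V$, the tensor $\omega$ corresponds to the map $g$ itself.

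Next I verify the three properties.

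\emph{Membership in $P$.} Since $m\in P$ and ${\rm id}\otimes g\in{\rm Aut}(P)$ by definition of $G_2(P)$, we get $\omega\in P\subseteq C^\vee\omax C$.

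\emph{Extremality.} Because $C$ is indecomposable, \Cref{prop:automorphisms-extreme} gives ${\rm Aut}(C)\subseteq{\rm ex}({\rm Pos}(C))$. Under the identification $C^\vee\omax C={\rm Pos}(C)$, the element $\omega=g\in{\rm Aut}(C)$ is therefore an extreme element of $C^\vee\omax C$.

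\emph{Marginal.} By \Cref{lem:marginal-evaluation}, $\omega_C=\omega(u)=g(u)$.

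Hence $\omega$ is a purification of $g(u)$ in $P$. Since $g\in G_2(P)$ was arbitrary, every point of the orbit $G_2(P)\cdot u$ has a purification in $P$.

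There is no serious obstacle. The only point needing care is the bookkeeping in the first step: I must make sure that ${\rm id}\otimes g$ acts on the second factor, so that it corresponds to post-composition $h\mapsto g\circ h$ rather than to precomposition with the transpose. This is exactly what the convention that $V$ sits in the second tensor factor gives.
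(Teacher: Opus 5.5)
Your proposal is correct and follows the paper's proof essentially step for step. You identify $({\rm id}\otimes g)(m)$ with $g$, get membership in $P$ from $m\in P$ and ${\rm id}\otimes g\in{\rm Aut}(P)$, extremality from \Cref{prop:automorphisms-extreme}, and the marginal $g(u)$ from \Cref{lem:marginal-evaluation}; your explicit check that ${\rm id}\otimes g$ acts as post-composition $h\mapsto g\circ h$ just spells out a step the paper asserts without proof.
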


\begin{proof}
    Let $c=g(u)$ for some $g\in G_2(P)$. Since $m\in P$ and
    ${\rm id}\otimes g$ is an automorphism of $P$, we have
    \[
        ({\rm id}\otimes g)(m)\in P.
    \]
    Under the identification $C^\vee\omax C={\rm Pos}(C)$, this tensor
    corresponds to the automorphism $g$. By
    \Cref{prop:automorphisms-extreme}, indecomposability of $C$ implies that
    $g$ spans an extreme ray of ${\rm Pos}(C)$. Furthermore,
    \Cref{lem:marginal-evaluation} gives
    \[
        g_C=g(u)=c.
    \]
    Thus $g$ is a purification of $c$ in $P$.
\end{proof}

\begin{corollary}[Purifications for homogeneous cones]
\label{cor:exhom}
    If $C$ is indecomposable and homogeneous, then every interior point of $C$
    has a purification in $C^\vee\omax C$.
\end{corollary}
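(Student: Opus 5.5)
The plan is to apply \Cref{cor:hompuri} with the largest admissible choice $P=C^\vee\omax C$. We then need to check two things: that $P$ contains $m$, and that $G_2(P)$ is all of ${\rm Aut}(C)$. Once this is done, homogeneity finishes the argument. By definition, a homogeneous cone is one on which ${\rm Aut}(C)$ acts transitively on ${\rm int}(C)$. Since $u\in{\rm int}(C)$, the orbit ${\rm Aut}(C)\cdot u$ is then exactly ${\rm int}(C)$, and \Cref{cor:hompuri} yields a purification in $C^\vee\omax C$ for every interior point.

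The containment $m\in C^\vee\omax C$ was already noted in the paper: $m$ corresponds to ${\rm id}\in{\rm Pos}(C)$.

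For the automorphism condition, fix $g\in{\rm Aut}(C)$. Under the identification $C^\vee\omax C={\rm Pos}(C)$, the map ${\rm id}\otimes g$ acts by $h\mapsto g\circ h$. I would verify this directly on simple tensors: $f\otimes x$ is the map $v\mapsto f(v)x$, and it is sent to $f\otimes g(x)$, which is the map $v\mapsto f(v)g(x)=(g\circ(f\otimes x))(v)$. The general case follows by linearity. If $h(C)\subseteq C$, then $g(h(C))\subseteq g(C)=C$, so ${\rm id}\otimes g$ maps ${\rm Pos}(C)$ into itself. Its inverse ${\rm id}\otimes g^{-1}$ does the same, because $g^{-1}\in{\rm Aut}(C)$. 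Hence ${\rm id}\otimes g\in{\rm Aut}(C^\vee\omax C)$, and so $G_2(C^\vee\omax C)={\rm Aut}(C)$.

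No step is a real obstacle. The only point needing care is the bookkeeping of which tensor factor corresponds to which side of the composition under $V'\otimes V\cong{\rm Lin}(V,V)$, so that ${\rm id}\otimes g$ is correctly identified with post-composition by $g$. Indecomposability enters only through the hypothesis of \Cref{cor:hompuri}, via \Cref{prop:automorphisms-extreme}.
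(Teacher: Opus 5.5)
Your proposal is correct and follows the paper's proof: apply \Cref{cor:hompuri} with $P=C^\vee\omax C$, note $m\in P$ and $G_2(P)={\rm Aut}(C)$, and conclude by transitivity of ${\rm Aut}(C)$ on ${\rm int}(C)$. Your explicit check that ${\rm id}\otimes g$ acts as post-composition $h\mapsto g\circ h$ fills in a step that the paper states without detail, namely that the maximal tensor product is preserved by all local automorphisms.
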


\begin{proof}
    Apply \Cref{cor:hompuri} with $P=C^\vee\omax C$. The maximally
    entangled state $m$ belongs to $P$, and the maximal tensor product is
    preserved by all local automorphisms. Hence
    \[
        G_2(P)={\rm Aut}(C).
    \]
    Since $C$ is homogeneous, ${\rm Aut}(C)$ acts transitively on
    ${\rm int}(C)$. The conclusion therefore follows from
    \Cref{cor:hompuri}.
\end{proof}

\begin{remark}
    Let $\otimes$ be a tensor product of cones that is functorial with respect to positive maps, and
    suppose that
    \[
        m\in C^\vee\otimes C.
    \]
    Then, for every $g\in{\rm Pos}(C)$, functoriality
    gives
    \[
        g=({\rm id}\otimes g)(m)
        \in C^\vee\otimes C.
    \]
    Since ${\rm Pos}(C)=C^\vee\omax C$, it follows that
    \[
        C^\vee\otimes C=C^\vee\omax C.
    \]
    Thus if $P$ from \Cref{cor:hompuri} arises from a functorial tensor product, we necessarily have $P=C^\vee\omax C$.
\end{remark}

We now turn to uniqueness of purifications.
We denote by
${\rm ex}_u({\rm Pos}(C))$ the set of pure positive maps of $C$ that fix $u$. In view of the above, this is exactly the set of purifications of $u$. For $g\in{\rm Pos}(C)$ we denote by  $g^*\in{\rm Pos}(C^\vee)$ the dual map, defined by $g^*(f)=f\circ g$.

\begin{theorem}[Uniqueness of purifications]
\label{thm:unique-purifications-subcone}
    Let $C$ be indecomposable, let
    \[
        P\subseteq C^\vee\omax C={\rm Pos}(C)
    \]
    be a convex cone containing the maximally entangled state $m$, and set
    \[
        G_1(P)=\bigl\{g\in{\rm Aut}(C)\bigm|
        g^*\otimes{\rm id}\in{\rm Aut}(P)\bigr\}.
    \]
    Then the purification $m$ of $u$ is unique among the purifications of
    $u$ that lie in $P$, up to local automorphisms of $P$ on the first
    tensor factor, if and only if
    \[
        {\rm ex}_u({\rm Pos}(C))\cap P\subseteq G_1(P).
    \]
    If these equivalent conditions hold, then every point in
    $G_2(P)\cdot u$ has a unique purification lying in $P$, up to local
    automorphisms of $P$ on the first tensor factor.
\end{theorem}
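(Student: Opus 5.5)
First I translate everything into maps. Under $C^\vee\omax C={\rm Pos}(C)$, for $g\in{\rm Aut}(C)$ the local map $g^*\otimes{\rm id}$ sends $h\mapsto h\circ g$, because $(g^*\otimes{\rm id})(f\otimes x)=(f\circ g)\otimes x$ corresponds to $v\mapsto f(g v)x$. Likewise ${\rm id}\otimes g$ sends $h\mapsto g\circ h$. In particular $(g^*\otimes{\rm id})(m)=g$. By \Cref{lem:marginal-evaluation}, the purifications of $u$ in $P$ are exactly the elements of ${\rm ex}_u({\rm Pos}(C))\cap P$. I read ``$m$ is unique up to local automorphisms of $P$ on the first factor'' as: every purification $h$ of $u$ in $P$ has the form $(g^*\otimes{\rm id})(m)$ for some $g\in G_1(P)$. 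By the formula above, this means $h=g$ for some $g\in G_1(P)$. The first step is to make this reformulation precise, including the remark that $m$ itself is a purification of $u$ in $P$ (by \Cref{prop:automorphisms-extreme} and ${\rm id}(u)=u$).

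With this reformulation the equivalence is almost tautological. If uniqueness holds, every $h\in{\rm ex}_u({\rm Pos}(C))\cap P$ equals some $g\in G_1(P)$, so the inclusion holds. Conversely, if $h\in{\rm ex}_u({\rm Pos}(C))\cap P$ lies in $G_1(P)$, then $h=(h^*\otimes{\rm id})(m)$ with $h^*\otimes{\rm id}\in{\rm Aut}(P)$, which is the required relation.

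For the final claim, let $c=g(u)$ with $g\in G_2(P)$. \Cref{cor:hompuri} already gives the purification $g$ in $P$. Now take any purification $h$ of $c$ in $P$, and set $k=g^{-1}\circ h=({\rm id}\otimes g^{-1})(h)$.
\begin{itemize}
\item Since ${\rm id}\otimes g$ is an automorphism of $P$, so is its inverse, hence $k\in P$.
\item The map $h\mapsto g^{-1}\circ h$ is a linear automorphism of ${\rm Pos}(C)$, so it preserves extreme rays and $k$ is extreme.
\item $k(u)=g^{-1}(c)=u$.
\end{itemize}
Thus $k\in{\rm ex}_u({\rm Pos}(C))\cap P\subseteq G_1(P)$, and so
\[
h=g\circ k=({\rm id}\otimes g)\,(k^*\otimes{\rm id})(m)=(k^*\otimes{\rm id})\bigl(({\rm id}\otimes g)(m)\bigr)=(k^*\otimes{\rm id})(g).
\]
Here the two local maps commute. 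So $h$ is obtained from the purification $g$ by the local automorphism $k^*\otimes{\rm id}$ of $P$ on the first factor, which is the asserted uniqueness.

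The main obstacle is interpretive rather than technical. I must fix precisely what ``unique up to local automorphisms of $P$ on the first factor'' means. Specifically, I should justify that it is enough to compare every purification to $m$, rather than comparing arbitrary pairs. Comparing to $m$ is enough because $G_1(P)$ is a group: if $h_1=(a^*\otimes{\rm id})(m)$ and $h_2=(b^*\otimes{\rm id})(m)$, then $h_2=((a^{-1}b)^*\otimes{\rm id})(h_1)$, so any two purifications are related. I also need to check that the identifications of $g^*\otimes{\rm id}$ with right composition and of ${\rm id}\otimes g$ with left composition are consistent with the conventions in \Cref{lem:marginal-evaluation}.
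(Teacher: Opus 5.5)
Your proposal is correct and follows the paper's proof. You identify $g^*\otimes{\rm id}$ with precomposition, so the first-factor orbit of $m$ is exactly $G_1(P)$; for $c=g(u)$ you pull a purification $h$ back to $g^{-1}\circ h\in{\rm ex}_u({\rm Pos}(C))\cap P$. Your extra remarks make explicit steps the paper leaves implicit: that $G_1(P)$ is a group (so any two purifications are related), that $h\mapsto g^{-1}\circ h$ preserves extreme rays, and that the two local maps commute.
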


\begin{proof}
    Since $C$ is indecomposable, $m$ corresponds to the extreme positive map
    ${\rm id}$ and is therefore a purification of $u$. Under the
    identification $C^\vee\omax C={\rm Pos}(C)$, the action of
    $g^*\otimes{\rm id}$ is precomposition by $g$, that is,
    $(g^*\otimes{\rm id})(h)=h\circ g$. Hence the orbit of $m$
    under local automorphisms of $P$ on the first factor is precisely
    $G_1(P)$. The first assertion now follows because the purifications of
    $u$ lying in $P$ are exactly
    \[
        {\rm ex}_u({\rm Pos}(C))\cap P.
    \]
    Now let $c=g(u)$ with $g\in G_2(P)$. The tensor
    $({\rm id}\otimes g)(m)$ lies in $P$, corresponds to the automorphism
    $g$, and thus purifies $c$. If $h\in P$
    is any other purification of $c$, then
    $({\rm id}\otimes g^{-1})(h)$ lies in $P$, corresponds to the extreme
    positive map $g^{-1}\circ h$, and satisfies
    \[
        (g^{-1}\circ h)(u)=g^{-1}(c)=u.
    \]
    By the first assertion, $g^{-1}\circ h\in G_1(P)$. Thus $h$ is obtained from
    $g$ by a local automorphism of $P$ on the first tensor factor, proving
    the claim.
\end{proof}

\section{Applications and Examples}

\begin{example}[Purifications in quantum theory]
\label{ex:qupuri}
    Let $C={\rm Psd}_d$ be the cone of positive semidefinite complex hermitian
    $d\times d$ matrices and let $u=I_d$. Under the usual self-duality of $C$, set
    \[
        P=C^\vee\otimes_{\rm psd}C\coloneqq {\rm Psd}_{d^2}.
    \]
    The maximally entangled state $m$ is a rank-one positive semidefinite
    matrix and therefore belongs to $P$. Moreover, the congruence
    automorphisms
    \[
        g_A(X)=AXA^*,\qquad A\in{\rm GL}_d(\mathbb C),
    \]
    preserve $P$ on the second tensor factor and act transitively on
    ${\rm int}(C)$. Hence \Cref{cor:hompuri} shows that every positive-definite
    matrix has a purification in ${\rm Psd}_{d^2}$. This recovers the usual
    existence theorem for quantum purifications.

    Now let $h\in{\rm ex}_u({\rm Pos}(C))\cap P$. Since the Choi matrix of $h$ is positive
    semidefinite and $h$ is extremal, we have 
    \[
        h(X)=AXA^*
    \]
    for some matrix $A$. From  $h(I_d)=I_d$ we obtain that $A$ is a unitary. Again, since such congruences are local automorphisms of $P$ and act transitively on the interior of $C$, we obtain the well-known uniqueness of purifications for positive definite matrices from \Cref{thm:unique-purifications-subcone}.
\end{example}

\begin{theorem}[Lorentz cones]
    For $n\geqslant 3$ every interior point of the Lorentz cone $L_n\subseteq\mathbb R^n$ has a unique purification in $L_n^\vee\omax L_n$, up to local
    automorphisms on the first tensor factor.
\end{theorem}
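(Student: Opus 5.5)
We will apply \Cref{cor:exhom} for existence and \Cref{thm:unique-purifications-subcone} with $P=L_n^\vee\omax L_n={\rm Pos}(L_n)$ for uniqueness. For $n\geqslant 3$ the Lorentz cone
\[
L_n=\{(t,x)\in\mathbb R\times\mathbb R^{n-1}\mid t\geqslant\|x\|\}
\]
is indecomposable, being irreducible as a symmetric cone. It is also homogeneous, since ${\rm Aut}(L_n)\supseteq\mathbb R_{>0}\cdot SO^+(1,n-1)$ acts transitively on the interior. Choose $u=(1,0)$, which is the point defined by the unit effect; after an automorphism we may always assume this normalization.

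Since $P$ is the maximal tensor cone, every ${\rm id}\otimes g$ and every $g^*\otimes{\rm id}$ with $g\in{\rm Aut}(L_n)$ is an automorphism of $P$. Hence
\[
G_1(P)=G_2(P)={\rm Aut}(L_n).
\]
Existence is then exactly \Cref{cor:exhom}. For uniqueness, \Cref{thm:unique-purifications-subcone} reduces everything to the single inclusion
\[
{\rm ex}_u({\rm Pos}(L_n))\subseteq{\rm Aut}(L_n),
\]
that is, every extreme positive map $h$ of $L_n$ with $h(u)=u$ must be an automorphism.

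To prove this inclusion we use the classical description of the extreme rays of ${\rm Pos}(L_n)$, due to Loewy--Schneider and Hildebrand. Every extreme positive map of $L_n$ is either a rank-one map $f\otimes r$ with $f\in{\rm ex}(L_n^\vee)$ and $r\in{\rm ex}(L_n)$, or a positive multiple of an automorphism. In the rank-one case we get $h(u)=f(u)r$, which lies on the boundary. It therefore cannot equal the interior point $u$, so this case is excluded. Consequently any $h\in{\rm ex}_u({\rm Pos}(L_n))$ is a positive multiple of an automorphism, hence lies in ${\rm Aut}(L_n)=G_1(P)$. This proves uniqueness.

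The main obstacle is the classification of extreme positive maps. If one does not want to quote Loewy--Schneider, here is a self-contained route for the case $h(u)=u$. Positivity together with $h(u)=u$ forces $h$ to have the block form
\[
h=\begin{pmatrix}1&0\\ b&A\end{pmatrix}.
\]
Positivity then amounts to the inequality $1\geqslant\|b+Ay\|$ for all $\|y\|\leqslant1$.

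Before using this, one must justify the zero in the top-right block, i.e.\ that the first row of $h$ is $(1,0)$. If that row were $(1,w^T)$ with $w\neq 0$, then $h^*$ would map $u\in{\rm int}(L_n^\vee)$ to the point $(1,w)$. Combined with positivity, this forces a specific structure on $h$. The natural remedy is to pass to the conjugate $k\circ h\circ k^{-1}$ by a boost $k$ fixing nothing in particular, which changes which interior point is fixed. The careful handling of this normalization is where I expect the real difficulty to lie, and it is the point at which I would fall back on the known classification.

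Granting the block form, the maps with $h(u)=u$ form a compact convex set, namely the affine maps $y\mapsto b+Ay$ sending the unit ball into itself. One then shows that an extreme element of ${\rm Pos}(L_n)$ in this set is an orthogonal map with $b=0$. This last step uses $n\geqslant 3$, where the unit ball is strictly convex of dimension at least $2$.
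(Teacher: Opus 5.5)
Your main argument is the paper's proof. Indecomposability and homogeneity of $L_n$ give existence via \Cref{cor:exhom}, and $G_1(P)=G_2(P)={\rm Aut}(L_n)$ holds for the maximal tensor cone. The Loewy--Schneider dichotomy (rank one or automorphism), together with the fact that a rank-one map sends $u$ to the boundary, yields ${\rm ex}_u({\rm Pos}(L_n))\subseteq{\rm Aut}(L_n)$, so \Cref{thm:unique-purifications-subcone} applies.

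Your optional self-contained sketch is not needed, and as written it is flawed. The condition $h(u)=u$ fixes the first column of $h$, not the first row, so your block form actually describes $h^*$. Also, extreme points of that affine slice need not be extreme rays of ${\rm Pos}(L_n)$: the constant maps $y\mapsto e$ with $\|e\|=1$ are extreme in the slice but not extreme rays. Since you fall back on the classification, this does not affect the proof.
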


\begin{proof}
   The Lorentz cone is
    indecomposable and homogeneous. Every extreme positive endomorphism of
    $L_n$ is either of rank one or an automorphism
    \citep[Theorem 4.5]{LoewySchneider1975}. A rank-one extreme map cannot fix an interior
    point $u$, and hence
    \[
        {\rm ex}_u({\rm Pos}(L_n))\subseteq{\rm Aut}(L_n).
    \]
    The claim thus follows from \Cref{cor:exhom} and \Cref{thm:unique-purifications-subcone}.
\end{proof}

\begin{example}
    Let $C={\rm Psd}_2$. As a real convex
    cone, $C$ is isomorphic to the Lorentz cone $L_4$. It follows that every
    positive-definite $2\times2$ matrix has a unique purification in $C^\vee\omax C,$
    up to a local automorphism on the first tensor factor.
    This is stronger than the uniqueness statement from \Cref{ex:qupuri}, which considers purifications in 
    $C^\vee\otimes_{\rm psd}C={\rm Psd}_4$ only.
\end{example}

\begin{example}[Non-uniqueness of purifications]
\label{ex:choi}
    Let $C={\rm Psd}_3$ with $u=I_3$. Besides the identity,
    consider the normalized Choi map $\Phi\colon C\to C$ given by
    \[
        \Phi(X)=\frac{1}{2}
        \begin{pmatrix}
            x_{11}+x_{33} & -x_{12} & -x_{13} \\
            -x_{21} & x_{22}+x_{11} & -x_{23} \\
            -x_{31} & -x_{32} & x_{33}+x_{22}
        \end{pmatrix}.
    \]
    This map is positive and spans an extreme ray of ${\rm Pos}(C)$
    \citep{Choi1975,Ha2013}; moreover, $\Phi(I_3)=I_3$. Thus both ${\rm id}$ and
    $\Phi$ are purifications of $I_3$ in $C^\vee\omax C$.

    These purifications are not equivalent under local automorphisms on the
    first tensor factor. Indeed, the orbit of ${\rm id}$ is ${\rm Aut}(C)$,
    whereas $\Phi$ is not an automorphism: it sends the rank-one matrix
    $E_{11}$ to $\tfrac12\operatorname{diag}(1,1,0)$, which has rank two. Since automorphisms map extreme rays (=rank one matrices) to extreme rays, $\Phi$ cannot be an automorphism.
   
    This does not contradict \Cref{ex:qupuri}, since $\Phi$ is not
    completely positive and hence does not belong to the smaller cone
    $C^\vee\otimes_{\rm psd}C={\rm Psd}_9$.
\end{example}

\begin{example}[$k$-positive maps]
    Let $C={\rm Psd}_d$ with $d\geqslant3$ and $u=I_d$. For
    $1\leqslant k\leqslant d$, let $P_k\subseteq{\rm Pos}(C)$ be the cone of
    $k$-positive maps. Thus
    \[
        P_d\subseteq\cdots\subseteq P_2\subseteq P_1={\rm Pos}(C),
    \]
    where $P_d={\rm Psd}_{d^2}$ is the cone of completely positive maps. Thus by \Cref{ex:qupuri}, every positive definite matrix has a purification in every $P_k$.

    We claim that these purifications are unique, at least for $k\geqslant2$. Indeed,
    let
    \[
        h\in{\rm ex}_u({\rm Pos}(C))\cap P_k.
    \]
    Then $h$ is an extremal positive map that is $2$-positive, and is
    therefore completely positive
    \citep[Theorem~3.4]{Marciniak2010}. Thus $h\in P_d,$ and purifications are unique here, by \Cref{ex:qupuri}. Note that congruence autormorphisms preserve all cones $P_k$, since they correspond to precompositions as maps.
    
    However, $P_1={\rm Pos}(C)$ admits
    non-unique purifications, as shown in \Cref{ex:choi} above.
\end{example}

\begin{example}[The PPT cone]
    Let $C={\rm Psd}_d$ with $d\geqslant2$, and consider the cone
    \[
        P_{\rm PPT}
        =\{c\in{\rm Psd}_{d^2}\mid \Gamma(c)\in{\rm Psd}_{d^2}\}
        \subseteq C^\vee\omax C,
    \]
    where $\Gamma$ denotes partial transpose. The maximally entangled state
    does not belong to $P_{\rm PPT}$: its partial transpose is not positive semidefinite for $d\geqslant2$.

    In fact, no positive-definite matrix has a purification in
    $P_{\rm PPT}$. To see this, let $\omega$ be a purification lying in
    $P_{\rm PPT}$. Since its Choi matrix is positive semidefinite and
    $\omega$ is extreme in ${\rm Pos}(C)$, that matrix must have rank one.
    A rank-one PPT matrix is a product matrix
    \citep[Lemma~2.14]{AubrunSzarek2017}, so its marginal has rank one,
    i.e.\ is already pure. This example shows that the assumption $m\in P$
    in \Cref{cor:hompuri} is essential.
\end{example}

\begin{example}[The square cone]
    Consider the cone $S\subseteq\mathbb R^3$, spanned by 
    $$
        (1,1,1),(1,-1,1),(-1,1,1),(-1,-1,1)
    $$
    and interior point $u=(0,0,1).$
    From \Cref{prop:poly} we know that the only boundary points with purification are the four extreme rays

    $S$  is indecomposable, and a direct but tedious computation shows that it has $8$ automorphisms up to scaling, all of which fix $u$. Further, $S^\vee\omax S$ has $24$ extreme rays, of which $16$ lie in $S^\vee\omin S$ and purify the extreme rays of $S$. The remaining $8$ correspond to automorphisms, which all purify the interior point. No other points admit purifications. All existing purifications are unique up to automorphisms.
\end{example}

\begin{example}
    Let $S\subseteq\mathbb R^3$ be the square cone from the preceding
    example and set
    \[
        C=S\oplus\mathbb R_{\geqslant0},
        \qquad
        u=(0,0,1,1)\in{\rm int}(C).
    \]
    Consider the positive map
    \[
        h\colon C\to C,
        \qquad
        h(s,t)=(s,0).
    \]
    This map is extreme. Indeed, suppose that $h=h_1+h_2$ with
    $h_i\in{\rm Pos}(C)$. Since $h(0,1)=0$, salience of $C$ gives
    $h_i(0,1)=0$. Moreover, $S\oplus\{0\}$ is a face of $C$, so the identity
    \[
        (s,0)=h_1(s,0)+h_2(s,0)
    \]
    implies that both summands lie in $S\oplus\{0\}$. Thus the restrictions
    of the $h_i$ give positive maps $k_i\in{\rm Pos}(S)$ satisfying
    ${\rm id}_S=k_1+k_2$. Since $S$ is indecomposable,
    \Cref{prop:automorphisms-extreme} implies that ${\rm id}_S$ is extreme,
    and hence each $k_i$ is a nonnegative multiple of ${\rm id}_S$.
    Therefore each $h_i$ is a nonnegative multiple of $h$.

    Consequently, $h$ is a purification of
    \[
        h(u)=(0,0,1,0).
    \]
    This point lies in the relative interior of the non-simplicial proper
    face $S\oplus\{0\}$ of $C$. In particular, it is a boundary point that
    is not pure. This shows that the simplicial-face assumption in \Cref{prop:poly} is essential.
\end{example}

\section*{Acknowledgments}
We thank Alexander M\"uller-Hermes for helpful discussions concerning
Lorentz cones.

\bibliographystyle{plainnat}
\bibliography{references}

\end{document}